\newtheorem{lemma}{Lemma}[section]
\numberwithin{equation}{section}
\begin{document}

\date{}
\title{Comparing Two Approaches in Heteroscedastic \\Regression Models}
\author{A. A. Jafari$\thanks{aajafari@yazd.ac.ir}$ \\
{\small Department of Statistics, Yazd University, Yazd, Iran}}

\date{}
\maketitle

\begin{abstract}
Recently,  a generalized test approach is proposed by
\cite{Sadooghi-ja-ma-16}
and a fiducial approach is proposed by
\cite{xu-li-18}
to test the equality of coefficients in several regression models with unequal variances. In this paper, it is shown that the considered quantities in these approaches are identically distributed and therefore, these approaches are same. Also, this result satisfies for the one-way ANOVA problem.
\end{abstract}
\noindent {\bf Keywords:} Fiducial approach; Generalized test variable; One-Way ANOVA; Regression.


\section{Introduction}

Consider the $k$ regression models
\[{{\boldsymbol Y}}_i =X_i{{\boldsymbol \beta }}_i+{{\boldsymbol \varepsilon }}_i\ \ \ \ {{\boldsymbol \varepsilon }}_i~\sim N\left({\boldsymbol 0}{\boldsymbol ,\ }{\sigma }^2_iI_{n_i}\right),\ \ \ \ \ \ \ i=1,\dots ,k,\]
where $X_i$ is $n_i\times p$ design matrix with rank $p$, ${{\boldsymbol Y}}_i{\boldsymbol =(}Y_{i1},\dots ,Y_{{in}_i})'$ is $n_i\times 1$ ($n_i>p$ for all $i$) observation vector, ${{\boldsymbol \beta }}_i=({\beta }_{i1},{\beta }_{i2},\dots ,{\beta }_{ip})'$ is the vector of parameters with dimension of $p$, ${\boldsymbol \varepsilon}_i$ is the $n_i\times 1$ disturbance vector, and $I_{n_i}$ is $n_i\times n_i$ identity matrix. Furthermore, all the ${{\boldsymbol \varepsilon }}_i$ are independent.

It is well-known that the unbiased estimations for ${{\boldsymbol \beta }}_i$ and ${\sigma }^2_i$ are ${\hat{{\boldsymbol \beta }}}_i ={\left(X'_iX_i\right)}^{-1}X'_i{{\boldsymbol Y}}_i$ and
$S^2_i=\boldsymbol Y'_i\left(I_p-X_i{\left(X'_iX_i\right)}^{-1}X'_i\right){\boldsymbol Y}_i/(n_i-p)$, respectively, such that they are independent. When the variances ${\sigma }^2_i$'s are unknown, an usual test statistic to test
\begin{equation}\label{eq.H0}
H_0:{{\boldsymbol \beta }}_1={{\boldsymbol \beta }}_2=\dots ={{\boldsymbol \beta }}_k,
\end{equation}
is
\citep[see][]{tian-ma-ve-09,Sadooghi-ja-ma-16,xu-li-18}
\[Q_0=\sum^k_{i=1}S^{-2}_i\hat{\boldsymbol \beta}'_i\left(X'_iX_i\right) \hat{\boldsymbol \beta}_i-
\left[\sum^k_{{\boldsymbol i}=1}{S^{-2}_i\hat{\boldsymbol \beta }'_i\left(X'_iX_i\right)}\right]{\left[\sum^k_{i=1}{S^{-2}_i\left(X'_iX_i\right)}\right]}^{-1}\left[\sum^k_
{i=1}S^{-2}_i\left(X'_iX_i\right)\hat{\boldsymbol \beta }_i\right].\]

Under null distribution, $Q_0$ can be approximated by the chi-square distribution with $p (k-1)$ degrees of freedom for large sample sizes. However, this approximation does not work well for small samples
\citep[see][]{tian-ma-ve-09,Sadooghi-ja-ma-16,xu-li-18}.
Therefore, some other approaches are proposed to test equality of regression models with unequal variances for example a parametric bootstrap approach by
\cite{tian-ma-ve-09},
a generalized approach  by
\citep{Sadooghi-ja-ma-16}
and a fiducial approach by
\citep{xu-li-18}.
In this paper, we compare these generalized and fiducial approaches. We will see that although these approaches are proposed in different ways and are not the same in appearance but they are identical.

 In Section \ref{sec.com}, the generalized and fiducial approaches are reviewed and compared. In Section \ref{sec.anova}, one way-ANOVA is discussed as a special case.

\section{Comparing two approaches}
\label{sec.com}

For testing $H_0$ in \eqref{eq.H0}, a generalized approach is proposed by
\cite{Sadooghi-ja-ma-16}
and a fiducial approach is proposed by
\cite{xu-li-18}.
 In this section, these  approaches are reviewed, briefly. Then, it is shown that they are identical.

Consider ${\boldsymbol b}_i$ and $s^2_i$, $i=1,\dots,k$, are the observed values of $\hat{\boldsymbol \beta}_i$ and $S^2_i$, respectively. For given $\left(\boldsymbol b_i, s^2_i\right)$,
\cite{xu-li-18}
derived a fiducial quantity to test \eqref{eq.H0} as
\begin{equation}\label{eq.QF}
Q_F=\sum^k_{i=1}{\boldsymbol t}'_i{\boldsymbol t}_i-\left[\sum^k_{i=1}{s^{-1}_i{{\boldsymbol t}}'_i{\left(X'_iX_i\right)}^{\frac{1}{2}}}\right]{\left[\sum^k_{i=1}{s^{-2}_i\left(X'_iX_i\right)}\right]}^{-1}
\left[\sum^k_{i=1}{s^{-1}_i{\left(X'_iX_i\right)}^{\frac{1}{2}}{\boldsymbol t}_i}\right],
\end{equation}
where ${\boldsymbol t}_i$ follows a multivariate student's t-distribution $t_p(n_i-p,{\boldsymbol 0},I_p)$, $i=1,\dots ,k$,
\citep[see][]{kotz-na-04multi}
and they are mutually independent. Therefore, the p-value to test \eqref{eq.H0} is given by $P(Q_F>Q_0)$.

Let $H=C \bigotimes I_p$ where $\bigotimes $ denotes Kronecker product, $C=\left[I_{k-1}:{\boldsymbol 1}\right]$, and ${\boldsymbol 1}=(1,\dots ,1)'$.  Also, consider $W'W={\left[HSH'\right]}^{-1}$ where $S=\left[{\rm diag}\left(s^2_i{\left(X'_jX_j\right)}^{-1}\right)\right]$.
\cite{Sadooghi-ja-ma-16}
defined a generalized test variable to test the hypothesis in \eqref{eq.H0} as
\begin{equation}\label{eq.QG}
Q_G={{\boldsymbol Z}}'WH\left[{\rm diag}\left(\frac{\left(n_i-p\right)s^2_i}{U_i}{\left(X'_jX_j\right)}^{-1}\right)\right]H'W'{\boldsymbol Z},
\end{equation}
where
${\boldsymbol Z} \sim N ({\boldsymbol 0},I_{p(k-1)} )$ and $U_i\sim{\chi }^2_{(n_i-p)}$, $i=1,\dots,k,$ such that ${\boldsymbol Z}$ and $U_i$'s are mutually independent. Therefore, the generalized p-value  is given by $P(Q_G>Q_0)$.

\begin{lemma}
$Q_F$ and $Q_G$ are identically distributed.
\end{lemma}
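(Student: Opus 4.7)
The plan is to show that, conditional on the $\chi^2$ variates $(U_1,\ldots,U_k)$, the laws of $Q_F$ and $Q_G$ agree. These $U_i$'s appear explicitly in $Q_G$; in $Q_F$ they appear through the standard mixture representation $\boldsymbol{t}_i = \sqrt{(n_i-p)/U_i}\,\boldsymbol{Z}_i^*$ with $\boldsymbol{Z}_i^* \sim N(\boldsymbol{0},I_p)$ independent of $U_i \sim \chi^2_{n_i-p}$. Since the joint law of $(U_1,\ldots,U_k)$ is the same in the two constructions, matching of the conditional laws yields unconditional equality in distribution.

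Let $A_i := s_i^{-1}(X_i'X_i)^{1/2}$, $\mathcal{A} := [A_1,\ldots,A_k]$ (a $p \times kp$ block row), $\boldsymbol{V} := (\boldsymbol{t}_1',\ldots,\boldsymbol{t}_k')'$, and $\Lambda := \mathrm{diag}(\lambda_i I_p)$ with $\lambda_i = (n_i-p)/U_i$. A direct inspection of \eqref{eq.QF} reveals
\[
Q_F = \boldsymbol{V}'(I - P)\boldsymbol{V}, \qquad P = \mathcal{A}'(\mathcal{A}\mathcal{A}')^{-1}\mathcal{A},
\]
so after substituting $\boldsymbol{V} = \Lambda^{1/2}\boldsymbol{Z}^*$ with $\boldsymbol{Z}^* \sim N(\boldsymbol{0}, I_{kp})$, the conditional law of $Q_F$ given $\boldsymbol{U}$ is that of a quadratic form in $N(\boldsymbol{0}, I_{kp})$, whose nonzero spectrum coincides with the nonzero spectrum of $(I-P)\Lambda$. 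Likewise, the cyclic invariance of eigenvalues together with $W'W = (HSH')^{-1}$ imply that the conditional law of $Q_G$ given $\boldsymbol{U}$ is a quadratic form in $N(\boldsymbol{0}, I_{(k-1)p})$ whose nonzero spectrum equals that of $H\Lambda SH'(HSH')^{-1}$.

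The crux is the matrix identity
\[
I - P \;=\; S^{1/2}\widetilde{H}'(HSH')^{-1}\widetilde{H}\,S^{1/2}, \qquad \widetilde{H} := [I_{k-1} : -\boldsymbol{1}] \otimes I_p.
\]
To establish it, the change of variables $\boldsymbol{u}_i = S_i^{1/2}\tilde{\boldsymbol{u}}_i$ (using $A_i^{-1} = S_i^{1/2}$) turns $\mathcal{A}\boldsymbol{u} = \boldsymbol{0}$ into $\sum_i \tilde{\boldsymbol{u}}_i = \boldsymbol{0}$, so $N := \widetilde{H} S^{1/2}$ has rows spanning $\mathrm{null}(\mathcal{A})$, with $NN' = \widetilde{H} S \widetilde{H}' = HSH'$ because flipping the sign of $\boldsymbol{1}$ in $H$ leaves the outer product invariant. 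Then $I - P = N'(NN')^{-1}N$ gives the displayed identity.

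Combining these, cyclic permutation of eigenvalues gives that the nonzero eigenvalues of $(I-P)\Lambda$ equal those of $\widetilde{H}\,S^{1/2}\Lambda S^{1/2}\widetilde{H}'(HSH')^{-1}$, which in turn equal those of $H\Lambda SH'(HSH')^{-1}$ because $S^{1/2}\Lambda S^{1/2} = \Lambda S$ blockwise and $\widetilde{H}\Lambda S\widetilde{H}' = H\Lambda SH'$ (same sign cancellation). This is precisely the spectrum already extracted for $Q_G$, so $Q_F \mid \boldsymbol{U}$ and $Q_G \mid \boldsymbol{U}$ have identical laws; marginalising over the common law of $\boldsymbol{U}$ yields $Q_F \stackrel{d}{=} Q_G$. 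The one nonroutine step is recognising that $\mathrm{null}(\mathcal{A})$ admits the $S^{1/2}$-rescaled parametrisation that makes $\widetilde{H}$ appear; after that it is matrix bookkeeping.
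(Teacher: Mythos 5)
Your proof is correct, and it takes a route that is organized quite differently from the paper's. The paper does not work with the definition \eqref{eq.QG} directly: it imports from \cite{Sadooghi-ja-ma-16} the representation $Q_G\stackrel{d}{=}Q_G^*=\boldsymbol V'(I_{pk}-qq')D(I_{pk}-qq')\boldsymbol V$ with $q_i=s_i^{-1}(X_i'X_i)^{1/2}\bigl[\sum_j s_j^{-2}X_j'X_j\bigr]^{-1/2}$ (note $qq'$ is exactly your $P=\mathcal A'(\mathcal A\mathcal A')^{-1}\mathcal A$), then swaps $P$ and $D^{*}=D^{1/2}$ — using, implicitly, the same ``nonzero spectrum of $AA'$ equals that of $A'A$'' invariance you invoke, although the paper writes this step as a chain of equalities rather than equalities in distribution — uses idempotency of $P$ to collapse $PP=P$, expands $\boldsymbol V'D^*PD^*\boldsymbol V$, and reads off $Q_F$ via the same $t$-mixture $\boldsymbol T_i=\sqrt{(n_i-p)/U_i}\,\boldsymbol V_i$. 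Your argument replaces the cited black-box representation with an explicit proof of the projector identity $I-P=S^{1/2}\widetilde H'(HSH')^{-1}\widetilde H S^{1/2}$, which is precisely the content of that representation (and your sign observation $\widetilde HS\widetilde H'=HSH'$, $\widetilde H\Lambda S\widetilde H'=H\Lambda SH'$ is what lets you use $H$ rather than $\widetilde H$ throughout); you then match conditional spectra rather than transforming one quadratic form into the other. What your version buys is self-containedness — it starts from \eqref{eq.QF} and \eqref{eq.QG} as stated and makes the distributional (as opposed to algebraic) nature of the $P\leftrightarrow D^*$ exchange explicit; what the paper's version buys is brevity and an explicit exhibition of $Q_G^*$ in the exact functional form of $Q_F$, rather than only an equality of spectra. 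Both hinge on the same three ingredients: the $t$-mixture representation, invariance of the law of a Gaussian quadratic form under the nonzero-spectrum equivalence, and the identification of $I-qq'$ as the orthogonal projection complementary to the column space of $\mathcal A'$.
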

\begin{proof}
Consider $\boldsymbol V_i \sim N ({\boldsymbol 0},I_p )$ and $U_i\sim {\chi }^2_{(n_i-p)}$, $i=1,\dots ,k,$ such that $\boldsymbol V_i$ and $U_i$'s are mutually independent. Also, consider  $D =\left[{\rm diag}\left(\frac{n_i-p}{U_i}\right)\right]\bigotimes I_p$.  Based on
\cite{Sadooghi-ja-ma-16},
 $Q_G$ has the same distribution as
\[Q^*_G= {\boldsymbol V}'\left(I_{pk}-qq'\right)D\left(I_{pk}-qq'\right){\boldsymbol V},\]
where
\[{\boldsymbol V}=\left[ \begin{array}{c}
{{\boldsymbol V}}_1 \\
{{\boldsymbol V}}_2 \\
\vdots  \\
{{\boldsymbol V}}_k \end{array}
\right],\ \ \ q=\left[ \begin{array}{c}
q_1 \\
q_2 \\
\vdots  \\
q_k \end{array}
\right],\ \ \ \ q_i=\left[s^{-1}_i{\left(X'_iX_i\right)}^{1/2}\right]{\left[\sum^k_{j=1}{s^{-2}_j\left(X'_jX_j\right)}\right]}^{-1/2}.\]
Consider $P=\left(I_{pk}-qq'\right)$ and $D^*=\left[{\rm diag}\left(\sqrt{\frac{n_i-p}{U_i}}\right)\right]\bigotimes I_p$. Then
\begin{eqnarray*}
Q^*_G&=&{{\boldsymbol V}}'PD^*D^*P{\boldsymbol V} = {{\boldsymbol V}}'D^*PPD^*{\boldsymbol V}
\\
&=&{{\boldsymbol V}}'D^*PD^*{\boldsymbol V}\\
&=&{{\boldsymbol V}}'D^*\left(I_{pk}-qq'\right)D^*{\boldsymbol V}\\
&=&{{\boldsymbol V}}'D{\boldsymbol V}-{{\boldsymbol V}}'D^*qq'D^*{\boldsymbol V}\\
&=&\sum^k_{i=1}{\frac{n_i-p}{U_i}{{\boldsymbol V}}'_i{{\boldsymbol V}}_i}-\left[\sum^k_{i=1}{\sqrt{\frac{n_i-p}{U_i}}s^{-1}_i
{{\boldsymbol V}}'_i {\left(X'_iX_i\right)}^{\frac{1}{2}}}\right]{\left[\sum^k_{i=1}{s^{-2}_i\left(X'_iX_i\right)}\right]}^{-1}\\
&&\times
\left[\sum^k_{i=1}{\sqrt{\frac{n_i-p}{U_i}}s^{-1}_i{\left(X'_iX_i\right)}^{\frac{1}{2}}{{\boldsymbol V}}_i}\right]\\
&=&\sum^k_{i=1}{{{\boldsymbol T}}'_i{{\boldsymbol T}}_i}-\left[\sum^k_{i=1}{s^{-1}_i{{\boldsymbol T}}'_i{\left(X'_iX_i\right)}^{\frac{1}{2}}}\right]{\left[\sum^k_{i=1}{s^{-2}_i\left(X'_iX_i\right)}\right]}^{-1}
\left[\sum^k_{i=1}{s^{-1}_i{\left(X'_iX_i\right)}^{\frac{1}{2}}{{\boldsymbol T}}_i}\right],
\end{eqnarray*}
where ${{\boldsymbol T}}_i=\sqrt{\frac{n_i-p}{U_i}}{{\boldsymbol V}}_i\sim t_p(n_i-p,{\boldsymbol 0},I_p)$. Therefore, $Q_F$ and $Q_G$ are identically distributed.
\end{proof}

\section{ One Way ANOVA problem}
\label{sec.anova}
Let $Y_{i1},Y_{i2},\dots ,Y_{in_i}$ is a random sample from a normal distribution with mean ${\mu }_i$ and variance ${\sigma }^2_i$, $i=1,\dots ,k$. The problem of testing equality of means of these $k$ distributions, i.e.
\[H^*_0:{\mu }_1=\dots ={\mu }_k,\]
is well-known to one-way ANOVA. It is a special case of $H_0$ in \eqref{eq.H0} with $p=1$ and $X_i={\boldsymbol 1}$. Therefore, the fiducial quantity in \eqref{eq.QF} becomes to
\begin{equation}\label{eq.QF2}
Q_F=\sum^k_{i=1}{t^2_i}-\frac{{\left(\sum^k_{i=1}{\sqrt{\frac{n_i}{s_i}}t_i}\right)}^2}{\sum^k_{i=1}{\frac{n_i}{s^2_i}}},
\end{equation}
where  $t_i$ has a t distribution with $n_i-1$ degrees of freedom, which is the fiducial quantity introduced by
\cite{li-wa-li-11}.
When $p=1$ and $X_i={\boldsymbol 1}$, the generalized test variable in \eqref{eq.QG} becomes to
\begin{equation}\label{eq.QG2}
Q_G={{\boldsymbol Z}}'WC\left[{\rm diag}\left(\frac{\left(n_i-p\right)s^2_i}{n_iU_i}\right)\right]C'W'{\boldsymbol Z},
\end{equation}
where $W'W={\left[CSC'\right]}^{-1}$, $S=\left[{\rm diag}\left(s^2_i/n_i\right)\right]$,
${\boldsymbol Z}\sim N ({\boldsymbol 0},I_{(k-1)} )$ and $U_i\sim {\chi }^2_{(n_i-p)}$, $i=1,\ \dots ,k$. This generalized test variable is introduced by
\cite{sa-ja-ma-12}.
This generalized test variable is also proposed by
\cite{xu-wa-08-anova}
in another form
\citep[see][]{sa-ja-ma-12}.

Based on Lemma, $Q_F$ in
 \eqref{eq.QF2}
 and $Q_G$ in  \eqref{eq.QG2}
 are identically distributed. Therefore, the all proposed approaches by
  \cite{xu-wa-08-anova}, \cite{li-wa-li-11}  and \cite{sa-ja-ma-12} for the one-way ANOVA problem with unequal variances are same.


\bibliographystyle{apa}

\end{document}